\newtheorem{lemma} {Lemma}
\newtheorem{prop} {Proposition}
\newtheorem{cor} {Corollary}
\begin{document}

\title{Analysis of Information Delivery Dynamics in Cognitive Sensor Networks Using Epidemic Models}

\author{Pin-Yu Chen,~\IEEEmembership{Member,~IEEE}, Shin-Ming~Cheng,~\IEEEmembership{Member,~IEEE}, and Hui-Yu Hsu
\thanks{P.-Y. Chen is with the AI Foundations Group, IBM Thomas J. Watson Research Center, Yorktown Heights, New York, USA. Email: pin-yu.chen@ibm.com.}
\thanks{S.-M. Cheng and H.-Y. Hsu are with the Department of Computer Science and Information Engineering, National Taiwan University of Science and Technology, Taipei, Taiwan. Email: \{smcheng, m10215060\}@mail.ntust.edu.tw.}
\thanks{Copyright (c) 2012 IEEE. Personal use of this material is permitted. However, permission to use this material for any other purposes must be obtained from the IEEE by sending a request to pubs-permissions@ieee.org.}}

\maketitle

\begin{abstract}
To fully empower sensor networks with cognitive Internet of Things (IoT) technology, efficient medium access control protocols that enable the coexistence of cognitive sensor networks with current wireless infrastructure are as essential as the cognitive power in data fusion and processing due to shared wireless spectrum. Cognitive radio (CR) is introduced to increase spectrum efficiency and support such an endeavor, which thereby becomes a promising building block toward facilitating cognitive IoT. In this paper, primary users (PUs) refer to devices in existing wireless infrastructure, and secondary users (SUs) refer to cognitive sensors. For interference control between PUs and SUs, SUs adopt dynamic spectrum access and power adjustment to ensure sufficient operation of PUs, which inevitably leads to increasing latency and poses new challenges on the reliability of IoT communications.

To guarantee operations of primary systems while simultaneously optimizing system performance in cognitive radio ad hoc networks (CRAHNs), this paper proposes interference-aware flooding schemes exploiting global timeout and vaccine recovery schemes to control the heavy buffer occupancy induced by packet replications. The information delivery dynamics of SUs under the proposed interference-aware recovery-assisted flooding schemes is analyzed via epidemic models and stochastic geometry from a macroscopic view of the entire system. The simulation results show that our model can efficiently capture the complicated data delivery dynamics in CRAHNs in terms of end-to-end transmission reliability and buffer occupancy. This paper sheds new light on analysis of recovery-assisted flooding schemes in CRAHNs and provides performance evaluation of cognitive IoT services built upon CRAHNs.
\end{abstract}

\begin{keywords}
buffer occupancy, cognitive radio ad hoc network, cognitive Internet of Things, epidemic model, flooding, global timeout recovery, vaccine recovery 
\end{keywords}




\section{Introduction}
\label{sec_intro}
Cognitive Internet of Things (IoT) technology features advanced machine intelligence toward improved data sensing and analysis~\cite{Wu14,Aijaz15}, which is particularly appealing to applications involving sensor networks. To fully incorporate cognitive IoT technology in sensor networks, intelligent and efficient medium access control protocols  enabling the coexistence of sensor networks with current wireless infrastructure and simultaneously optimizing the performance of IoT applications are emergent challenges. Consequently, in addition to cognitive power in data sensing and computation, another crucial factor toward cognitive sensor networks is the cognition in spectrum access in order to fully deploy cognitive IoT.

Over the past decade, cognitive radio (CR) has received dramatic attention as it holds tremendous promise for increasing the utilization of scarce radio spectrum shared by primary (licensed) systems (PSs) and secondary (unlicensed or conditionally licensed) systems (SSs), e.g., cognitive sensor networks. In particular, recent works on cognitive IoT technology have identified CR as a critical building block that supports cognitive sensor networks~\cite{Ren16,Majumdar16,Zhang17,wu2009practical,wu2009empirical}, where PSs refer to existing wireless infrastructure and SSs refer to cognitive radio networks. Using CR terminology, secondary users (SUs) sense surrounding environment and adapt their operations around those of the primary users (PUs) to opportunistically exploit available resources while limiting their interference to PUs. In \textit{interweave} paradigm~\cite{Goldsmith09}, SUs seek and exploit the temporary spectrum opportunities without causing any interference to PUs. To further improve the spectrum usage, SUs in \textit{underlay} paradigm are allowed to concurrently transmit with PUs as long as sufficient operation of PUs is ensured. When IoT applications are built upon CR-enabled cognitive sensor networks, the information delivery dynamics are jointly affected by the activities of PUs and CR medium access protocols, which inevitably leads to increasing latency for communications among IoT devices. This paper aims to study the data delivery dynamics of flooding schemes in cognitive sensor networks, which specifies the effect of packet delivery control and dynamic spectrum access on buffer occupancy and end-to-end reliability in data transmission, and therefore provides a novel analysis framework for evaluating the performance of information delivery in cognitive sensor networks.

In ad hoc environment with one channel and slotted ALOHA MAC protocol, access probability control is an instinctive solution for spectrum access adaptation in underlay paradigm. In a nut shell, each SU tosses a coin independently with its access probability and transmits if it gets a head. By adjusting the access probability, the subset of SUs acting as interferers could be controlled to prevent violating PU's outage constraint, i.e., outage probability relative to a signal-to-interference-plus-noise ratio (SINR) threshold. To concatenate cognitive sensors as a cognitive radio ad hoc network (CRAHN), Quality of Service (QoS) provisioning for end-to-end packet transmissions is an essential must. However, the severe interference challenges conventional path-determined single-path routing toward QoS guarantee in CRAHN~\cite{Liang11,Youssef14}, for example, the destination might not receive the packet within a pre-specified period.

To tackle the aforementioned challenges, the well known \textit{flooding} scheme without pre-established path is considered as an appropriate choice~\cite{Khalife09,De09}, that is, data is propagated from the source node to the destination node with the assistance of other nodes as relays. While such solution decreases delivery delay and increases the reliability for end-to-end transmissions in CRAHN~\cite{Song14_QB2IC,Song15}, nodes might receive huge copies not destinated to themselves, which is known as the \textit{buffer occupancy} problem~\cite{Haas06}. To resolve this issue, two famous recovery schemes, global timeout~\cite{Altman13,Abreu14} and vaccine~\cite{CPY14}, have been introduced to remove unnecessary data at the intermediate nodes in order to mitigate the resource consumption for epidemic data delivery. The rationale behind the recovery scheme is that once the destination has successfully received one copy of the packet, the network nodes shall remove copies from their respective buffers and stop the packet propagation process as soon as possible.

In this paper, we introduce a novel hybrid recovery scheme combining the advantages of global timeout and vaccine schemes for the flooding in CRAHN to increase end-to-end SU transmission reliability while maintaining acceptable buffer occupancy. Two flooding schemes, static flooding and mobile flooding, are proposed to investigate the effects of mobile nodes on end-to-end SU transmission reliability~\cite{Huang14}. Motivated by the behavior that transportation of diseases much resembles data dissemination dynamics in flooding, we apply epidemic model to investigate the information delivery phenomenon, aiming at exploiting the advantages of time efficiency. Moreover, the complicated interference is carefully considered in our model by applying stochastic geometry modeling~\cite{Haenggi09} instead of the traditional interference graph~\cite{Song14}. Specifically, we could derive the end-to-end successful transmission probability in flooding while simultaneously meeting the outage constraints for both PUs and SUs.

The simulation results validate the correctness of our model in the sense that the complicated information delivery dynamics can be characterized by our schemes. In particular, we show that with the aid of mobility, the end-to-end transmission reliability can be drastically improved, which obeys the recent theoretical derivations and observations~\cite{Garetto10,Huang14}. Moreover, we observe from the simulation that by exploiting recovery scheme, our flooding schemes would significantly reduce buffer occupancy. As a result, this work serves as a powerful and efficient tool for analyzing information delivery dynamics in cognitive sensor networks.




The remainder of this paper is organized as follows. We introduce the background and survey the related works in Section~\ref{sec_work}. Section~\ref{sec_sys} presents the system model for the following analysis. Section~\ref{sec_inter} investigates interference at SUs by using stochastic geometry, and section~\ref{sec_dynamics} further analyzes the interference-aware static and mobile flooding schemes with hybrid recovery by using epidemic models. Section~\ref{sec_num} provides an analysis and the numerical validation of the proposed schemes. Finally, Section~\ref{sec_con} concludes this work.


\section{Background and Related Works}
\label{sec_work}

\subsection{Recovery Schemes}
\label{subsec_rec}
In~\cite{Haas06}, several novel ideas stemmed from biological phenomenon are proposed to delete unnecessary replication in order to 
mitigate the resource consumption due to packet circulating and replication. Two famous schemes are introduced as follows.
\begin{LaTeXdescription}
\item [Global timeout scheme.] Before the global timer expires, nodes replicate packets and forward to nodes who have not receive packet yet. Upon global timer expires, all nodes delete packets from their buffers. 
\item [Vaccine scheme.] With the aid of ``antipacket'' spreading, vaccine scheme is the most vigorous mechanism to efficiently save buffer occupancy. Specifically, once the destination receives the packet, it propagate a antipacket in an epidemic fashion. The nodes receiving antipacket will delete the packet from their buffer, join the antipacket spreading process, and upon antipacket reception they are redeemed  from infection by the same packet later. Consequently, the situation that copies of a packet may endlessly circulating through the network can be controlled. 
\end{LaTeXdescription}

\subsection{Epidemic Model}
Inspired from epidemiology, information dissemination in communication networks has been found to surprisingly resemble the transportation of epidemics~\cite{Daley01,Chen10,cheng2013diffusion,CPY14,chen2016decapitation}.  The interactive dynamics of the compartmental model can be characterized by ordinary differential equations (ODEs)~\cite{Daley01}. The most classical epidemic models such as susceptible-infected (SI) model, susceptible-infected-susceptible (SIS) model, and susceptible-infected-recovered (SIR) model are widely used in analyzing the dynamics of information dissemination over networks. Here we use SIR model to analyze the information spread among SUs in CRAHNs.

Analog to conventional SIR model, a non-informed (i.e., susceptible) node receives a packet and keeps it in the buffer as if the node is infected, and an infected node discards the packet when the packet is either outdated or successfully delivered to the destination. The latter case is as if the node is immune to the virus (an infected node recovers from the disease forever). However, it still remains open on connecting fundamental wireless communication characteristics such as interference and medium access with epidemic model, which is one of the major contributions of this paper.

\subsection{Related Work}
In the following, we review the related literatures for both end-to-end transmission reliability analysis and epidemic model with recovery schemes in CRAHNs.

Regarding analytical models for studying end-to-end features (such as reliability and delay) of routing schemes in ad hoc networks, we concentrate on the works based on stochastic geometry since a tractable closed-form solution is provided. Authors in~\cite{Xu11} proposed models to analyze the upper bound of end-to-end delay on a route in traditional ad hoc network, while only Vaze~\cite{Vaze11} investigates the complicated throughput-delay-reliability tradeoffs in such network. Jacquet \textit{et al.}~\cite{Jacquet09} further studies the delay and routing selection in opportunistic routing. The extended work~\cite{Andrews12} investigates the transmission capacity in multi-path routing. However, all mentioned literatures only consider the role of intra-system interference on the outage constraint of a node. When considering a secondary network consisting of both intra-system interference among SUs and PUs as well as inter-system interference between SUs and PUs, things become more complicated.

Our previous work~\cite{CPY10} applied stochastic geometry to model the behaviors inter-system and intra-system interference and analyzed the end-to-end transmission delay of single-path and multi-path routing schemes in CRAHN. The distributions and the lower bound of information dissemination latency in CRAHN is addressed in~\cite{Sun11} and~\cite{Cheng13}, respectively. \cite{Han13} further investigated the effects of dynamic PU traffic on the end-to-end delay. Song \textit{et al.}~\cite{Song14} focus on the delay of broadcasting in CRAHN under the awareness of topology. QB$^2$IC~\cite{Song14_QB2IC} and BRACER~\cite{Song15} proposed by Song and Xie investigate the interference-aware broadcast in terms of end-to-end reliability and delay. Moreover, mobility has been proved to facilitate end-to-end transmission reliability from the perspectives of information theory~\cite{Garetto10}. Huang \textit{et al.} further proposed a spectrum-aware mobility-assisted routing scheme~\cite{Huang14} considering channel availability in relay selection. 

Recently, epidemic models is widely utilized to evaluate the performance of data delivery in different networks of interest. In cases where networks are intermittently connected, such as delay tolerant networks~\cite{Zhang06}, a message is delivered in a \textit{store-and-forward} manner and the transportation resembles the spread of epidemics~\cite{CPY10_Globecom}, which is known as epidemic routing~\cite{Zhang07}. However, huge buffer space are wasted for replication redundancy in epidemic routing and thus controlling replication has been well studied. It can be achieved by leveraging historical encounter information~\cite{Balasubramanian07} or geographic information~\cite{Fang11,Cao14} to choose the target for relaying or to control the number of replying copies. Another branch proposed to reduce the unnecessary overhead in data dissemination is infection recovery. Haas \textit{et al.}~\cite{Haas06} propose five packet discarding methods to improve buffer occupancy and the methods are analyzed by using ODE models in~\cite{Zhang07,Altman13,CPY14}. By including social relationships among users, recovery scheme is adjusted adaptively to speed up the delivery process~\cite{Yang16}.  Moreover, Galluccio \textit{et al.}~\cite{Galluccio16} apply technique of network coding into epidemic routing to further reduce the number of ineffective transmissions between nodes. 


\section{System Model}
\label{sec_sys}

\subsection{Network Model}
In this paper, we cast cognitive sensor networks as a CRAHN, where the spatial distribution of SUs is assumed to follow a homogeneous Poisson Point Process (PPP) with density $\lambda_{SU}$. Let $\Phi_{SU}=\{Y_k\}$ denote the set of locations of the SUs and $P_{SU}$ denote the transmit power of an SU. The media access control protocol of the secondary network is assumed as slotted ALOHA with access probability $p$, i.e., each SU is allowed to be active with probability $p$. The set of active SUs is a PPP with density $p\lambda$ and is denoted as $\Phi^{SU} = \{Y_k \in \Phi_{SU} : B_k(p)=1\}$, where ${B_k(p)}$ are independent and identically distributed Bernoulli random variables with parameter $p$ associated with $Y_k$. Each SU has an outage constraint with maximum outage probability $\epsilon_{SU}$.

The CRAHN coexists with a primary ad hoc network, where the spatial distribution of primary transmitters (PTs) is assumed to follow a PPP with density $\lambda_{PT}$. Each PT has a dedicated primary receiver (PR) at distance $r_{PT}$ away with an arbitrary direction so that the PRs also form a PPP with density $\lambda_{PT}$. The set of locations of PTs is denoted as $\Phi_{PT} = \{X_i\}$ and the transmit power of a PT is $P_{PT}$. Each PR has an outage constraint with maximum outage probability $\epsilon_{PR}$.


\subsection{Channel Model}
We consider the effects of path loss attenuation, Rayleigh fading with unit average power $\mathcal{G}$ (i.e., exponentially distributed with unit mean), and background noise power $N$ in our channel model. The path-loss exponent of transmission is denoted by $\alpha$. The successful reception of a transmission at an user depends on if SINR observed by the user is larger than an SINR threshold (denoted by $\eta$).

\subsection{Flooding Scheme}
Two flooding schemes are considered in our work and the details are described below
\begin{LaTeXdescription}
\item [Static flooding.] Since complete information of end-to-end path is in general unavailable between the source and the destination SUs, the data are delivered in a store-and-forward fashion, that is, all SUs successfully receiving the data flood by source SU will participate in relaying the data to other SUs in a flooding fashion until the data is received by the destination SU.

\item [Mobile flooding.] The data transportation is similar to static flooding (i.e., once receiving the data, SU will relay in a flood fashion) except that SUs are mobile. Random direction (RD) mobility model~\cite{Groenevelt05} is applied for all SUs. SU will randomly choose a direction to travel in, a speed at which to travel, and a time duration for this travel. Each time the SU reaches the new position, flood operation is performed.
\end{LaTeXdescription}

\subsection{Hybrid Recovery Scheme}
By taking advantages of global timeout and vaccine schemes, we propose a hybrid recovery scheme, that is, a global timer $T$ is set and antipacket is spread to mitigate the buffer occupancy. In particular, when the destination node receives packet successfully, it starts flooding the antipacket. The nodes receiving antipacket will discard the packet (i.e., go to recovered state) and join the antipacket spreading process. No matter if the vaccine scheme is performed or not, all nodes will delete packets once the global timer $T$ expires. It is crucial to find an optimal $T$, which implicitly determines the system end time, to mitigate the heavy buffer occupancy issue, while providing statistical guarantees on data delivery reliability. Typically, $S(t)$, $I(t)$ and $R(t)$ are denoted as the susceptible, infected and recovered subpopulation at time instant $t$, respectively, and $S(t)+I(t)+R(t)=M+1$ is the total population consisting of $M$ nodes (excluding the destination node)  and one destination node.

\subsection{Performance Metric}
We focus on analyzing the tradeoff between reliability and buffer occupancy. Regarding reliability, the packet reception rate $P(T)$ is applied, which is defined as the probability that the destination will receive the packet before the global timer $T$ expires. Typically, we use a maximum unsuccessful reception probability $\epsilon_T$ to constraint $P(T)$, that is,
\begin{eqnarray}
P(T) \ge 1-\epsilon_T. 
\end{eqnarray}
Buffer occupancy $Q(T)$ is also considered an important metric, which is defined as the total number of infected SUs at every moments until $T$ \cite{Zhang07}, that is,
\begin{eqnarray}
Q(T) = \int_0^{T}I(t) dt.
\end{eqnarray}

\section{Analysis of Interference at SU}
\label{sec_inter}

\begin{figure}[t]
    \centering
    \includegraphics[width=3.4in]{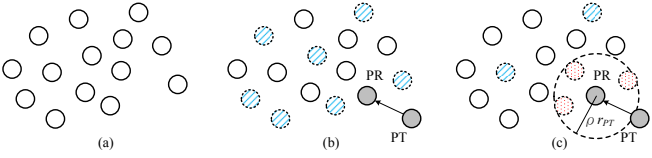}
    \caption{System model of CRAHN.~(a)~Absence of PS. The network topology is similar to traditional wireless ad hoc networks.~(b)~Outage constraint of PS. Blue slashed nodes are deactivated according to the active probability $\widetilde{p}$.~(c)~Avoidance region. Red dotted nodes are deactivated since they are in the vicinity of PR, while other blue slashed nodes are deactivated according to the active probability $\widehat{p}$.}
    \label{fig_network}
\end{figure}

\subsection{Outage Constraint of Primary Receiver Sensitivity}
To mitigate the interference to PRs, SUs exploit slotted ALOHA as the distributed spectrum access protocol. As shown in Fig. \ref{fig_network}~(b), each SU tosses a coin independently in each time slot and accesses the spectrum with head probability $\widetilde{p}$, where $\widetilde{p}$ is the parameter of i.i.d. Bernoulli random variables, $B_i(\widetilde{p})$.

\begin{lemma}
\label{lemma_density}
    To avoid interference from SUs violating the outage constraint at a PR, the permissible active density of SUs is $\widetilde{\lambda}_{SU}=\left( \frac{-\ln(1-\epsilon_{PR}) -\frac{\eta_{PR}}{P_{PT}r_{PT}^{-\alpha}}N} {r_{PT}^2 \eta_{PR}^{\delta} K_{\alpha}} - \lambda_{PT} \right)\left(\frac{P_{PT}}{P_{SU}}\right)^{\delta}$, and the active probability is $\widetilde{p}=\frac{\widetilde{\lambda}_{SU}}{\lambda_{SU}}$.
\end{lemma}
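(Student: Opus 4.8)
The plan is to derive a closed form for the non-outage probability at a typical PR via stochastic geometry and then invert the outage constraint $\Pr[\mathrm{SINR}\ge\eta_{PR}]\ge 1-\epsilon_{PR}$ to read off the largest admissible active SU density. First I would apply Slivnyak's theorem to fix a typical PR at the origin with its serving PT at distance $r_{PT}$ (the direction being immaterial since we condition on $r_{PT}$); the remaining PTs still form a PPP of density $\lambda_{PT}$, and independently of them the active SUs form a PPP of density $\widetilde{\lambda}_{SU}=\widetilde{p}\,\lambda_{SU}$ obtained by $\widetilde{p}$-thinning of $\Phi_{SU}$. Writing the received SINR as $\mathrm{SINR}=P_{PT}\mathcal{G}\,r_{PT}^{-\alpha}/(N+I_{PT}+I_{SU})$ with $\mathcal{G}\sim\mathrm{Exp}(1)$ and $I_{PT}$, $I_{SU}$ the aggregate interference from the two independent fields, conditioning on the exponential fading variable turns the non-outage event into an expectation of an exponential of the interference, so that
\begin{equation}
\Pr[\mathrm{SINR}\ge\eta_{PR}]=\exp\!\left(-\tfrac{\eta_{PR}}{P_{PT}r_{PT}^{-\alpha}}N\right)\mathcal{L}_{I_{PT}}(s)\,\mathcal{L}_{I_{SU}}(s),\qquad s=\tfrac{\eta_{PR}}{P_{PT}r_{PT}^{-\alpha}}.
\end{equation}

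Second, I would evaluate the two Laplace transforms with the probability generating functional of a PPP together with the standard Rayleigh-interference integral, obtaining $\mathcal{L}_{I_{PT}}(s)=\exp(-\lambda_{PT}K_\alpha(sP_{PT})^{\delta})$ and $\mathcal{L}_{I_{SU}}(s)=\exp(-\widetilde{\lambda}_{SU}K_\alpha(sP_{SU})^{\delta})$, where $\delta=2/\alpha$ and $K_\alpha$ is the usual constant $\pi\Gamma(1-\delta)\Gamma(1+\delta)$ produced by the integral (finite because $\alpha>2$, with the reflection formula collapsing it to $K_\alpha$). Substituting $s=\eta_{PR}r_{PT}^{\alpha}/P_{PT}$ and using $\alpha\delta=2$ reduces the exponents to $(sP_{PT})^{\delta}=\eta_{PR}^{\delta}r_{PT}^{2}$ and $(sP_{SU})^{\delta}=(P_{SU}/P_{PT})^{\delta}\eta_{PR}^{\delta}r_{PT}^{2}$, so $\Pr[\mathrm{SINR}\ge\eta_{PR}]$ becomes an explicit strictly decreasing function of $\widetilde{\lambda}_{SU}$.

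Finally, I would impose $\Pr[\mathrm{SINR}\ge\eta_{PR}]\ge 1-\epsilon_{PR}$, take logarithms, and solve the resulting linear inequality for $\widetilde{\lambda}_{SU}$; isolating $\widetilde{\lambda}_{SU}$ and taking the equality (boundary) case yields exactly the stated permissible active density, after which $\widetilde{p}=\widetilde{\lambda}_{SU}/\lambda_{SU}$ follows immediately from the thinning relation. The only points needing genuine care rather than bookkeeping are the factorization $\mathcal{L}_{I_{PT}+I_{SU}}=\mathcal{L}_{I_{PT}}\mathcal{L}_{I_{SU}}$, which is legitimate because the PT and SU point processes together with their fading marks are mutually independent, and the closed-form evaluation of the interference integral; once this Laplace-transform machinery is in place the remainder is straightforward algebra, so I expect no serious obstacle.
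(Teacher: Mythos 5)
Your proposal follows essentially the same route as the paper's proof: condition on the exponential fading gain to turn the non-outage probability into $e^{-sN}\mathcal{L}_{I_{PT}}(s)\mathcal{L}_{I_{SU}}(s)$ with $s=\eta_{PR}/(P_{PT}r_{PT}^{-\alpha})$, evaluate both Laplace transforms via the PPP probability generating functional (your $\pi\Gamma(1-\delta)\Gamma(1+\delta)$ agrees with the paper's $K_{\alpha}=\frac{2\pi^{2}}{\alpha\sin(2\pi/\alpha)}$ by the reflection formula), and solve the resulting linear equation in $\widetilde{\lambda}_{SU}$ at the outage boundary. The argument is correct and matches the paper's proof in all substantive steps.
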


\begin{proof}
    The receiver sensitivity of a PR is maintained when only $\widetilde{p}$ portion of SUs are allowed to transmit. This subset of SUs, denoted as $\Phi_{SU}(\widetilde{p})=\{Y_i:B_i(\widetilde{p})=1\}$ with density $\widetilde{\lambda}_{SU} = \widetilde{p} \lambda_{SU}$, is obtained by independent thinning of $\Phi_{SU}$ with probability $\widetilde{p}$. We have the outage constraint
    \begin{eqnarray}
    \label{eqn3}
        \mathbb{P}\left(\frac{\mathcal{G}_{PT}P_{PT}r_{PT}^{-\alpha}}{N+ I_{SU}+I_{PT}} \geq \eta_{PR}\right) = 1 - \epsilon_{PR},
    \end{eqnarray}
    where $\mathcal{G}_{PT}$ is the channel power gain, $\eta_{PR}$ is the SINR threshold of PR, $I_{SU}=\sum_{Y_i\in \Phi_{SU}(\widetilde{p})}\mathcal{G}_{Y_i}P_{SU}\| Y_i \|^{-\alpha}$ is the interference from SUs to a typical PR, and $I_{PT}=\sum_{X_i\in \Phi_{PT}}\mathcal{G}_{X_i} P_{PT}\| X_i \|^{-\alpha}$ is the interference from other PTs. $\| \cdot \|$ denotes the distance to the origin, and the channel power gain $\mathcal{G}_{X_i}$ and $\mathcal{G}_{Y_i}$ of the interfering links are also exponentially distributed with unit mean. The left hand side of (\ref{eqn3}) can be evaluated as \cite{Ao10}
    \begin{eqnarray}
    \label{eqn01}
        &&\mathbb{P}\left[\mathcal{G}_{PT} \geq \frac{\eta_{PR}}{P_{PT}r_{PT}^{-\alpha}}(N+I_{SU}+I_{PT})\right] \nonumber\\
        &&=\exp \left(-\frac{\eta_{PR}}{P_{PT}r_{PT}^{-\alpha}}N\right) 
        \mathbb{E}\left[\exp \left(-\frac{\eta_{PR}}{P_{PT}r_{PT}^{-\alpha}}I_{SU}\right)\right] \nonumber \\
        &&~\cdot
        \mathbb{E}\left[\exp \left(-\frac{\eta_{PR}}{P_{PT}r_{PT}^{-\alpha}}I_{PT}\right)\right] \nonumber\\
        &&= \exp \left(-\frac{\eta_{PR}}{P_{PT}r_{PT}^{-\alpha}}N\right) \nonumber\\
        &&~\cdot
        \exp\left\{-\left(\widetilde{\lambda}_{SU} \left(\frac{P_{SU}}{P_{PT}}\right)^{\delta}+\lambda_{PT}\right) r_{PT}^2 \eta_{PR}^{\delta} K_{\alpha}\right\},
    \end{eqnarray}
    and from (\ref{eqn3}) and (\ref{eqn01}), when $\frac{-\ln(1-\epsilon_{PR}) -\frac{\eta_{PR}}{P_{PT}r_{PT}^{-\alpha}}N} {r_{PT}^2 \eta_{PR}^{\delta} K_{\alpha}} \geq \lambda_{PT}$ we obtain the permissible active density
    \begin{eqnarray}
    \label{eqn02}
        \widetilde{\lambda}_{SU} &=& \left( \frac{-\ln(1-\epsilon_{PR}) -\frac{\eta_{PR}}{P_{PT}r_{PT}^{-\alpha}}N} {r_{PT}^2 \eta_{PR}^{\delta} K_{\alpha}} - \lambda_{PT} \right)\left(\frac{P_{PT}}{P_{SU}}\right)^{\delta}  \nonumber \\
        &\triangleq&  \sigma P_{SU}^{-\delta},
    \end{eqnarray}
    where $K_{\alpha}=\frac{2\pi^2}{\alpha \sin(2\pi / \alpha)}$ and $\delta = 2/\alpha$.
\end{proof}

Furthermore, if a PT does not impose interference on other PRs (e.g., CDMA is exploited), additional SUs are allowed to be activated due to looser outage constraint.

\begin{cor}
\label{cor_density}
    Additional $\lambda_{PT} \left( \frac{P_{PT}}{P_{SU}} \right) ^{\delta}$ density of SUs are activated when $I_{PT}=0$.
\end{cor}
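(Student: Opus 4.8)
The plan is to re-run the derivation of Lemma~\ref{lemma_density} with the inter-system interference term switched off and simply read off the change in the permissible density. Setting $I_{PT}=0$ in the outage constraint~(\ref{eqn3}), the left-hand side still factorizes exactly as in~(\ref{eqn01}): the noise, $I_{SU}$, and $I_{PT}$ are mutually independent, and $\mathcal{G}_{PT}$ is exponential, so the SINR outage probability splits into a product of Laplace transforms. The only difference is that the PT-interference factor $\mathbb{E}\!\left[\exp\!\left(-\tfrac{\eta_{PR}}{P_{PT}r_{PT}^{-\alpha}}I_{PT}\right)\right]$ is now identically $1$ instead of $\exp\!\left(-\lambda_{PT} r_{PT}^2 \eta_{PR}^{\delta}K_{\alpha}\right)$. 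Hence the $\lambda_{PT}$ summand drops out of the exponent in~(\ref{eqn01}), and solving the resulting equation for the new permissible active density (call it $\widehat{\lambda}_{SU}$) by the same algebra that produced~(\ref{eqn02}) gives
\begin{eqnarray}
\widehat{\lambda}_{SU} = \frac{-\ln(1-\epsilon_{PR}) - \frac{\eta_{PR}}{P_{PT}r_{PT}^{-\alpha}}N}{r_{PT}^2 \eta_{PR}^{\delta}K_{\alpha}}\left(\frac{P_{PT}}{P_{SU}}\right)^{\delta},
\end{eqnarray}
which is precisely the expression for $\widetilde{\lambda}_{SU}$ in Lemma~\ref{lemma_density} with the $-\lambda_{PT}$ term inside the bracket removed.

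Subtracting the two densities then yields the claim directly:
\begin{eqnarray}
\widehat{\lambda}_{SU} - \widetilde{\lambda}_{SU} = \lambda_{PT}\left(\frac{P_{PT}}{P_{SU}}\right)^{\delta}.
\end{eqnarray}
Equivalently, translating back to access probabilities via $p=\lambda_{SU}^{(\cdot)}/\lambda_{SU}$, the admissible access probability increases by $\lambda_{PT}(P_{PT}/P_{SU})^{\delta}/\lambda_{SU}$, so that $\widehat{p}=\widetilde{p}+\lambda_{PT}(P_{PT}/P_{SU})^{\delta}/\lambda_{SU}$.

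Since this is essentially bookkeeping layered on top of Lemma~\ref{lemma_density}, I do not anticipate a genuine obstacle; the one point worth stating explicitly is the justification of the factorization step in~(\ref{eqn01}), which relies on $\Phi_{SU}(\widetilde{p})$ and $\Phi_{PT}$ being independent PPPs (so the joint Laplace functional is a product) and on the Rayleigh assumption on $\mathcal{G}_{PT}$ (so the outage event becomes an exponential tail, hence a Laplace transform). Dropping $I_{PT}$ preserves both properties, so the closed form transfers verbatim except for the deleted term. I would also remark that the feasibility side-condition weakens from $\frac{-\ln(1-\epsilon_{PR}) - \frac{\eta_{PR}}{P_{PT}r_{PT}^{-\alpha}}N}{r_{PT}^2 \eta_{PR}^{\delta}K_{\alpha}} \geq \lambda_{PT}$ down to merely requiring that same ratio to be nonnegative, which matches the intuition that removing a source of interference can only enlarge the feasible set of active SUs.
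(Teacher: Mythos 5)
Your proof is correct and follows essentially the same route as the paper, whose entire argument is to set $\lambda_{PT}=0$ in~(\ref{eqn02}); you simply make explicit why killing $I_{PT}$ removes exactly the $\lambda_{PT}$ summand from the exponent (the corresponding Laplace-transform factor becomes $1$) and then subtract the two densities. The only cosmetic caveat is that your symbol $\widehat{\lambda}_{SU}$ for the enlarged density collides with the paper's later use of $\widehat{\lambda}_{SU}$ for the avoidance-region density in Lemma~\ref{lemma_avoid}, so a different name would avoid confusion.
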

\begin{proof}
This is a direct result of (\ref{eqn02}) when $\lambda_{PT}$ is set to be $0$.
\end{proof}


\subsection{Avoidance Region}
\label{sec_avoid}
If an SU is close to a PR, deactivation of the SU (instead of following slotted ALOHA with certain access probability) may increase the overall permissible active density while maintaining the same receiver sensitivity of the PR as illustrated in Fig. \ref{fig_network}~(c). With the capability of dynamic spectrum access, an SU is deactivated when it is located within radius $\rho r_{PT}$ of a PR, where $\rho$ is a reasonably small nonnegative value named avoidance region radius coefficient. That is, each PR has an SU-avoidance region with radius $\rho r_{PT}$, and the permissible active density of SUs increases following the next lemma \cite{Ao10}.
\begin{lemma}
\label{lemma_avoid}
    The permissible active density of SUs satisfying the outage constraint of primary receiver sensitivity is enhanced from $\widetilde{\lambda}_{SU}$ to  $\widehat{\lambda}_{SU}\geq \widetilde{\lambda}_{SU}$, and $\widehat{\lambda}_{SU}= \widetilde{\lambda}_{SU} \cdot \frac{\pi/2}{\pi/2-\tan^{-1}(\sqrt{\frac{P_{PT}}{\eta_{PR}P_{SU}}}\rho^2)}$ when $\alpha=4.$
\end{lemma}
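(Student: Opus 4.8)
The plan is to re-run the outage computation from the proof of Lemma~\ref{lemma_density}, but with the active SU field modified so that the avoidance disk $B(o,\rho r_{PT})$ centered at the typical PR (placed at the origin $o$ by Slivnyak's theorem) is empty. Concretely, I would model the post-deactivation active SU set as a PPP of unknown density $\widehat{\lambda}_{SU}$ on $\mathbb{R}^2\setminus B(o,\rho r_{PT})$, keep the PT process and the noise term exactly as in Lemma~\ref{lemma_density}, and impose the same outage equality $\mathbb{P}(\mathrm{SINR}_{PR}\ge\eta_{PR})=1-\epsilon_{PR}$. Because $\mathcal{G}_{PT}$ is exponential, the left-hand side again factorizes into the noise factor, the PT-interference Laplace factor, and the SU-interference Laplace factor $\mathbb{E}[\exp(-sI_{SU})]$ with $s=\eta_{PR}/(P_{PT}r_{PT}^{-\alpha})$; only this last factor changes.

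Next I would evaluate the SU-interference Laplace functional over the punctured plane via the PPP probability generating functional, $\mathbb{E}[e^{-sI_{SU}}]=\exp\!\big(-\widehat{\lambda}_{SU}\int_{\mathbb{R}^2\setminus B(o,\rho r_{PT})}\big(1-\mathbb{E}_{\mathcal{G}}[e^{-s\mathcal{G}P_{SU}\|y\|^{-\alpha}}]\big)\,dy\big)$. Using $\mathbb{E}_{\mathcal{G}}[e^{-t\mathcal{G}}]=1/(1+t)$ and passing to polar coordinates, the exponent becomes $2\pi\widehat{\lambda}_{SU}\int_{\rho r_{PT}}^{\infty}\frac{c\,r}{c+r^{\alpha}}\,dr$ with $c\triangleq sP_{SU}=\eta_{PR}P_{SU}r_{PT}^{\alpha}/P_{PT}$. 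For $\alpha=4$ the substitution $u=r^2$ turns the integral into an arctangent, $\int_{\rho r_{PT}}^{\infty}\frac{c\,r}{c+r^4}\,dr=\frac{\sqrt c}{2}\big(\frac{\pi}{2}-\tan^{-1}(\rho^2 r_{PT}^2/\sqrt c)\big)$, and since $\sqrt c=r_{PT}^2\sqrt{\eta_{PR}P_{SU}/P_{PT}}$ the argument simplifies to $\rho^2\sqrt{P_{PT}/(\eta_{PR}P_{SU})}$, matching the statement.

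Finally I would compare with the no-avoidance case. Setting $\rho=0$ recovers the exponent $\widetilde{\lambda}_{SU}\cdot\pi\sqrt c\cdot\pi/2$ of Lemma~\ref{lemma_density} (consistent with $K_4=\pi^2/2$, $\delta=1/2$). Since the noise and PT factors are identical in both cases, equating the two outage expressions forces $\widehat{\lambda}_{SU}\big(\frac{\pi}{2}-\tan^{-1}(\rho^2\sqrt{P_{PT}/(\eta_{PR}P_{SU})})\big)=\widetilde{\lambda}_{SU}\cdot\frac{\pi}{2}$, which is the claimed formula; monotonicity of $\tan^{-1}$ and $\rho\ge 0$ give $\widehat{\lambda}_{SU}\ge\widetilde{\lambda}_{SU}$.

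The main obstacle is justifying that it is legitimate to model the active SU field as a PPP that is merely "hole-punched" around the typical PR. In reality every PR carves out its own avoidance disk, so the true process is a PPP minus a union of random disks, and conditioning on a PR at the origin also perturbs the law near the other PRs (a Palm-type effect). I would handle this either by the standard first-order approximation that neglects overlaps of other PRs' avoidance regions (negligible when $\lambda_{PT}\cdot\pi(\rho r_{PT})^2\ll 1$), or by observing that deleting any additional SUs only reduces $I_{SU}$, so the computed $\widehat{\lambda}_{SU}$ is a conservative, achievable permissible density; either way the bound $\widehat{\lambda}_{SU}\ge\widetilde{\lambda}_{SU}$ and the explicit $\alpha=4$ expression follow, as in~\cite{Ao10}. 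A minor secondary point is the feasibility condition (the analogue of the inequality $-\ln(1-\epsilon_{PR})-\frac{\eta_{PR}}{P_{PT}r_{PT}^{-\alpha}}N\ge\cdots$ in Lemma~\ref{lemma_density}) needed for $\widehat{\lambda}_{SU}$ to be nonnegative, which I would record alongside the formula.
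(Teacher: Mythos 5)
Your proposal is correct and follows essentially the same route as the paper: evaluate the Laplace transform of the SU interference over the plane with the avoidance disk removed, reduce the $\alpha=4$ integral to an arctangent via $u=r^2$, and equate with the $\rho=0$ exponent of Lemma~\ref{lemma_density} to solve for $\widehat{\lambda}_{SU}$. The only difference is that the paper's proof additionally carries an independent-thinning factor $\exp(-\lambda_{PT}\pi\rho^2)$ for SUs falling inside \emph{other} PRs' avoidance regions---precisely the Palm-type effect you flag in your closing paragraph---though that factor is dropped in the lemma statement itself, so your derivation matches the stated formula directly.
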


\begin{proof}
    If an SU is deactivated when it is located within radius $\rho$ of a PR, each PR has an SU-avoidance region with radius $\rho$. The probability of an SU located in $\mathcal{B}(PR;\rho)$ is $1-\exp(-\lambda_{PT}\pi\rho^2)$, where the notation $\mathcal{B}(x;r)$ represents the circle of radius $r$ centered at $x$. The interference from SUs at a typical PR located at origin becomes
\begin{eqnarray}
I'_{SU} = \sum_{Y_i \in {\Phi_{SU}^{\widehat{p}} \setminus \mathcal{B}(0;\rho)}}G_{Y_i}P_{SU}\parallel Y_i\parallel^{-\alpha} \mathbf{1}_{Y_i \not \in \mathcal{B}(PR;\rho)}.
\end{eqnarray}
Also, we have the moment generating function of $I'_{SU}$ as
\begin{eqnarray}
\label{eqn4}
\mathbb{E}[\exp(-tI'_{SU})] &=& \exp\left(-2\pi \widehat{\lambda}_{ST} \right. \nonumber \\
&~&\cdot \left. \exp(-\lambda_{PT}\pi\rho^2) \int_{\rho}^{\infty} \frac{x}{1+\frac{x^\alpha}{tP_{SU}}} dx\right).
\end{eqnarray}
Substitute $t=\frac{\eta_{PR}}{P_{PT}r_{PT}^{-\alpha}}$ and $\alpha = 4$ in (\ref{eqn4}), we have
\begin{eqnarray}
&&\mathbb{E}[\exp(-\frac{\eta_{PR}}{P_{PT}r_{PT}^{-\alpha}}I'_{SU})] \nonumber \\
&&= \exp\left\{-\widehat{\lambda}_{ST} \exp(-\lambda_{PT}\pi\rho^2) r_{PT}^2 (\frac{\eta_{PR}P_{SU}}{P_{PT}})^{\frac{1}{2}} \right. \nonumber \\
&&~\cdot \left. \left[\frac{\pi^2}{2}-\pi\tan^{-1}(\sqrt{\frac{P_{PT}}{\eta_{PR}P_{SU}}}\frac{\rho^2}{r_{PT}^2})\right] \right\}
\end{eqnarray}
Compared with the case without avoidance region, i.e., $\rho=0$ or equation (\ref{eqn02}) with $\alpha =4, K_4=\pi^2/2$, we have
\begin{eqnarray}
\frac{\widehat{\lambda}_{ST} \exp(-\lambda_{PT}\pi\rho^2)}{\widetilde{\lambda}_{SU}} = \frac{\pi^2/2}{\pi^2/2-\tan^{-1}(\sqrt{\frac{P_{PT}}{\eta_{PR}P_{SU}}}\frac{\rho^2}{r_{PT}^2})} \geq 1
\end{eqnarray}
\end{proof}


\section{Analysis of Interference-aware Information Dynamics}
\label{sec_dynamics}

\begin{figure}[t]
    \centering
    \includegraphics[width=3.4in]{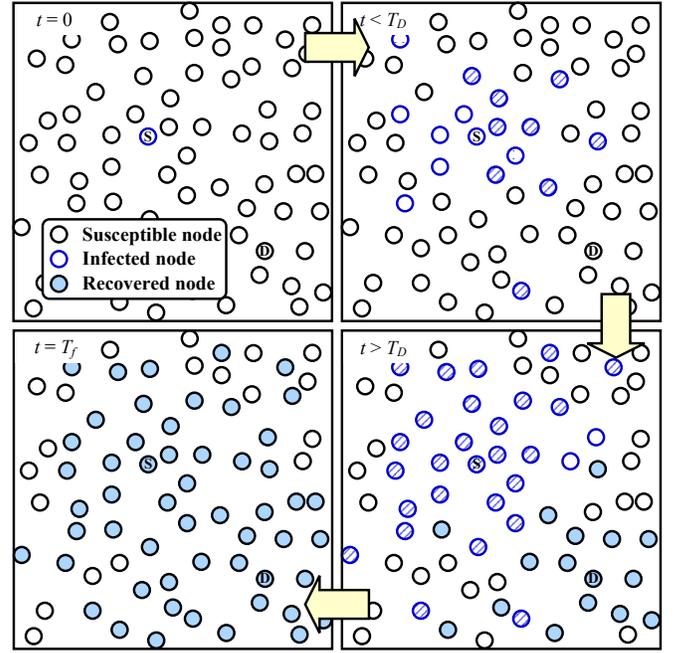}
    \caption{Illustration of the static flooding scheme. The source node (denoted as $S$) is infected at $t=0$. Susceptible nodes who successfully receive the packet from the infected nodes will transition to infected state.}
    \label{fig_static}
\end{figure}

\subsection{Static Flooding Scheme}
As shown in Fig.~\ref{fig_static}, as $t=0$, the source node at infection state starts spreading packets to all susceptible nodes. The node who successfully received the packet will be infected and join the subsequent flooding process. After the destination receives the packet (i.e., $T_D$), the hybrid recovery is performed and antipackets are spread from the destination in the flooding fashion. The node who successfully receives the antipacket (no matter currently in susceptible or infected state) will delete the packet from the buffer and transition to recovered state. In hybrid recovery scheme, only the node who is currently in infected state will join the antipacket spreading process. When the global timer expires at $T$, all infected nodes would directly transition to the recovered state and stop spreading. Taking the active density of SUs into account to satisfy the outage constraints of primary systems, we incorporate avoidance region in Section \ref{sec_avoid} to formulate the performance of static flooding scheme.

\begin{prop}
\label{prop_FER}
    For the static flooding scheme with global timeout duration $T$ and vaccine recovery scheme deployed by avoidance region in CRAHNs, the information delivery dynamics can be characterized by
    \begin{eqnarray}
    \label{eqn_con-epi}
    \left\{ \begin{array}{ll}
                \frac{dI(t)}{dt} = \widehat{p} \psi(t) \frac{\beta}{2} c \sqrt{I(t)} \frac{M-I(t)-R(t)}{M} \\ ~~~~~~~~~- \widehat{p} \mu(t) \frac{\beta}{2} c \sqrt{R(t)+P(t)} \frac{I(t)}{M}, & \\
                \frac{dR(t)}{dt} = \widehat{p} \mu(t) \frac{\beta}{2} c \sqrt{R(t)+P(t)} \frac{I(t)}{M} \\ ~~~~~~~~~+ \widehat{p} \mu(t) \frac{\beta}{2} c \sqrt{R(t)+P(t)} \frac{M-I(t)-R(t)}{M}, & \\
                \frac{dP(t)}{dt} = \widehat{p} \psi(t) \frac{\beta}{2} ~\frac{I(t)}{M}(1-P(t)), & \textnormal{~for } t \le T, \\
                I(t)=0 \textnormal{~and } R(t)=0, & \textnormal{~for } t > T, \\
            \end{array} \right.
    \end{eqnarray}
    where $\psi(t)$ is the infection rate function at time $t$, $\mu(t)$ is the recovery rate function at time $t$, $\psi(t)=\mu(t)=e^{- \beta \widehat{p}(\frac{I(t)+R(t)+P(t)}{M})T_F}$, $\beta$ is the average number of neighbors per SU, $\widehat{p}=\frac{\widehat{\lambda}_{SU}}{\lambda_{SU}}$ is the active probability of an SU, $c=2 \sqrt{\beta+1}$, and $T_F$ is the time period of a frame.
\end{prop}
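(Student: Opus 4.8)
The plan is to obtain the ODE system as a fluid (mean-field) limit of the frame-by-frame stochastic flooding process, reusing the stochastic-geometry machinery of Lemmas~\ref{lemma_density}--\ref{lemma_avoid}. First I would analyze a single frame of length $T_F$ and compute the probability that a tagged susceptible SU lying on the boundary of the infected cluster successfully decodes a packet during that frame. Conditioned on the currently transmitting SUs --- infected nodes forwarding packets together with recovered nodes and the destination forwarding antipackets --- forming a thinned PPP, the Rayleigh-fading decoding probability in each slot has the same exponential/Laplace-transform form appearing in~(\ref{eqn01}); aggregating over the $T_F$ slots of the frame and expressing the transmitting density through the population fractions gives the infection and recovery rate functions $\psi(t)=\mu(t)=e^{-\beta\widehat{p}\frac{I(t)+R(t)+P(t)}{M}T_F}$, in which $(I+R+P)/M$ is the fraction of the $M$ nodes that are actively transmitting, $\beta$ converts this into an expected interfering-neighbor count, $\widehat{p}$ accounts for slotted-ALOHA activation under the avoidance-region scheme, and $T_F$ counts the slots of the frame. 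The essential modeling step here is replacing the true spatial interference field by this scalar aggregate proxy, together with the observation $\psi=\mu$ because packet and antipacket transmitters share the same spectrum and hence generate statistically identical interference.

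Second I would quantify the spatial spreading of the two clusters. Since the SUs form a homogeneous PPP and flooding propagates outward, the infected set at time $t$ is well approximated by a disk containing $I(t)$ nodes, so the number of susceptible nodes on its boundary --- those able to receive directly from an infected transmitter --- scales like $c\sqrt{I(t)}$, the constant coming from relating the perimeter of a disk of area proportional to $I(t)$ to the node density; identifying that density with the mean degree $\beta$ yields $c=2\sqrt{\beta+1}$. The identical argument applied to the antipacket cluster (the $R(t)+P(t)$ recovered-or-destination nodes) gives its active boundary as $c\sqrt{R(t)+P(t)}$. A boundary transmitter then converts its susceptible neighbors at a rate equal to the product of $\widehat{p}$, the per-frame success probability $\psi(t)$ or $\mu(t)$, the factor $\tfrac{\beta}{2}$ (the outward-facing half of a boundary node's neighbors), and the appropriate population fraction, namely $\tfrac{M-I(t)-R(t)}{M}$ for susceptible$\to$infected or susceptible$\to$recovered conversions and $\tfrac{I(t)}{M}$ for the infected$\to$recovered conversion driven by antipackets.

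Third I would assemble the compartmental balance equations: $\tfrac{dI}{dt}$ gains from susceptible$\to$infected conversions along the infected boundary and loses from infected$\to$recovered conversions forced by the antipacket boundary; $\tfrac{dR}{dt}$ collects both the infected$\to$recovered flow and the direct susceptible$\to$recovered flow produced by the antipacket cluster; and $\tfrac{dP}{dt}$ is handled separately because the destination is a single node, so its reception rate carries no $\sqrt{\cdot}$ boundary factor and reads $\widehat{p}\psi(t)\tfrac{\beta}{2}\tfrac{I(t)}{M}$, multiplied by $(1-P(t))$ since $P(t)$ is the probability that the destination has not yet received the packet. The global timer enters as the terminal condition $I(t)=R(t)=0$ for $t>T$, reflecting that every node flushes its buffer once $T$ expires.

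The main obstacle is justifying the $\sqrt{I(t)}$ boundary scaling together with the precise constant $c=2\sqrt{\beta+1}$: this relies on the infected region remaining disk-shaped and on a clean perimeter-to-area conversion through the PPP intensity, which is a heuristic rather than an exact identity. A secondary difficulty is defending the mean-field decoupling of the interference into the single quantity $(I+R+P)/M$ and the resulting equality $\psi(t)=\mu(t)$; once these two approximations are granted, the remaining work is routine accounting of the flows among the $S$, $I$, $R$ compartments and the destination.
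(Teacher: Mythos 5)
Your overall architecture matches the paper's: an avoidance-region thinning factor $\widehat{p}$, a contention-dependent rate $\psi(t)=\mu(t)$, the $\tfrac{\beta}{2}c\sqrt{\cdot}$ boundary terms inherited from the baseline static-flooding epidemic model, and the global-timer cutoff. On the $c\sqrt{I(t)}$ scaling you in fact give more justification than the paper does---the paper simply imports that form by ``comparison with the static flooding scheme'' and never derives $c=2\sqrt{\beta+1}$, whereas your perimeter-to-area argument is a reasonable reconstruction of where it comes from.

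Two concrete discrepancies remain. First, your mechanism for $\psi(t)$ does not actually produce the stated formula. A per-slot Rayleigh-fading decoding probability of the Laplace-transform type in (\ref{eqn01}) is exponential in an interferer \emph{density} times a geometric constant (of the form $r^{2}\eta^{\delta}K_{\alpha}$), and aggregating independent slots gives a power $(1-q)^{T_F}$ rather than $e^{-(\cdot)\,T_F}$; there is no clean route from there to $e^{-\beta\widehat{p}\frac{I(t)+R(t)+P(t)}{M}T_F}$. The paper instead models the traffic of an SU and its neighbors as a Poisson process with arrival rate $g=\beta\widehat{p}$, thins it by the transmitting fraction $(I+R+P)/M$, and takes $\psi(t)$ to be the collision-free (void) probability over a frame of length $T_F$---which yields that exponential exactly, and the same argument gives $\mu(t)$. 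Second, you leave $\beta$ abstract, while the paper's proof spends most of its length evaluating it: $\beta=\lambda_{SU}\int_{0}^{\infty}\mathbb{P}(\mathrm{SINR}\ge\eta_{SU})\,2\pi r\,dr$ as in (\ref{eqn_neighbor}), closing to $\beta=k_5\sqrt{P_{SU}}$ for $\alpha=4$. That computation is where the stochastic-geometry interference model (PT interference plus noise) actually enters the epidemic dynamics, and it is needed downstream in Corollary~\ref{cor_neighbor} and in the optimization of $P_{SU}$; omitting it leaves the ``interference-aware'' half of the proposition unsubstantiated.
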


\begin{proof}
    Compared with static flooding scheme in (\ref{eqn_con-epi}), the propagation of packets and antipackets is further affected by the active probability $\widehat{p}$ in CRAHNs. Assuming the traffic of an SU and its neighboring nodes follows a Poisson process with arrival rate $g=\beta \widehat{p}$ packets per unit time. The infection rate is the collision-free probability
    \begin{eqnarray}
    \psi(t)=e^{-g \cdot (\frac{I(t)+R(t)+P(t)}{M}) T_F}=e^{-\beta \widehat{p}(\frac{I(t)+R(t)+P(t)}{M})T_F}
    \end{eqnarray}
    since the interference from neighboring SUs are regarded as contentions for an SU. The same argument applies to $\mu(t)$. Suppose there is a circular region with a sufficiently large radius $R_c$, the number of SUs in the region is $M+1 = \lambda_{SU}\pi R_c^2$, and the average number of neighbors of an SU (i.e., the SUs who successfully receive the packet) can be computed as
    \begin{eqnarray}
    \label{eqn_neighbor}
        \beta &=& \lambda_{SU} \int_0^{R_c} \mathbb{P}\left(\frac{\mathcal{G}_{SU}P_{SU}r^{-\alpha}}{I_{PT}+N} \geq \eta_{SU}\right) 2\pi r dr \nonumber\\
        &\overset{R_c \rightarrow \infty}{=}& \lambda_{SU} \int_0^\infty \exp\left\{ -\frac{\eta_{SU}N}{P_{SU}}r^{\alpha}  \right. \nonumber \\
        &~&\cdot \left.
        -\lambda_{PT}\left( \frac{P_{PT}}{P_{SU}} \right)^\delta \eta_{SU}^\delta K_{\alpha} r^{2} \right\} 2\pi r dr.
    \end{eqnarray}
    From \cite[Section 2.33]{Gradshteyn00}, $\beta = k_5 {P_{SU}}^{\frac{1}{2}}$ when $\alpha=4$,
    where
    \begin{eqnarray}
    \label{eqn_k5}
        k_5 = \frac{\lambda_{SU} \pi}{2} \sqrt{\frac{\pi}{\eta_{SU}N}} \exp \left\{ \frac{{\lambda_{PT}}^2 P_{PT} \pi^4 }{16N} \right\} \mathrm{erfc}\left( \frac{{\lambda_{PT}} {P_{PT}}^{\frac{1}{2}} \pi^2 }{4N^{\frac{1}{2}}} \right) \nonumber
    \end{eqnarray}
    and $\mathrm{erfc}(x)=\frac{2}{\sqrt{\pi}} \int_x^{\infty} e^{-t^2}dt$ is the complementary error function.
\end{proof}

Note that the buffer occupancy is $Q(T)=\int^{T}_{0}I(t)dt$ by Little's formula \cite{Zhang07}. Therefore there is clearly a trade-off between global timer $T$ and buffer occupancy. With \textbf{Lemma \ref{cor_density}}, \textbf{Lemma \ref{lemma_avoid}} and (\ref{eqn_neighbor}), we then derive $\beta \widehat{p}$, the average number of active neighbors per SU.

\begin{cor}
\label{cor_neighbor}
When $\alpha=4$, the effective average number of neighbors in CRAHNs is \\
    $\beta \widehat{p}=\sigma k_5 \pi \left\{ 2 \lambda_{SU} \left[ \frac{\pi}{2}-\tan^{-1}\left( \sqrt{\frac{P_{PT}}{\eta_{PR} P_{SU}}}\rho^2\right) \right] \right\}^{-1}$.
\end{cor}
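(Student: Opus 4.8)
The plan is to read off $\beta\widehat{p}$ directly from two quantities already evaluated in closed form under $\alpha=4$: the mean number of neighbours $\beta$ obtained in the proof of Proposition~\ref{prop_FER}, and the enhanced active probability $\widehat{p}$ implied by Lemma~\ref{lemma_avoid}. The only substantive point is that the dependence on the SU transmit power $P_{SU}$ in these two factors is reciprocal and therefore cancels.

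First I would recall from Proposition~\ref{prop_FER} that, for $\alpha=4$, the integral (\ref{eqn_neighbor}) evaluates via \cite[Section 2.33]{Gradshteyn00} to $\beta = k_5\,P_{SU}^{1/2}$, where $k_5$ is the $P_{SU}$-independent constant in (\ref{eqn_k5}); this $\beta$ already incorporates the inter-system interference $I_{PT}$ and the noise $N$ seen at a receiving SU. Next, by definition $\widehat{p}=\widehat{\lambda}_{SU}/\lambda_{SU}$, and Lemma~\ref{lemma_avoid} gives, for $\alpha=4$, $\widehat{\lambda}_{SU}=\widetilde{\lambda}_{SU}\cdot\frac{\pi/2}{\pi/2-\tan^{-1}\!\left(\sqrt{P_{PT}/(\eta_{PR}P_{SU})}\,\rho^2\right)}$. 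Writing the permissible density of Lemma~\ref{lemma_density} in the compact form $\widetilde{\lambda}_{SU}=\sigma P_{SU}^{-\delta}$ from (\ref{eqn02}) and using $\delta=2/\alpha=1/2$, this is $\widetilde{\lambda}_{SU}=\sigma P_{SU}^{-1/2}$ with $\sigma$ the $P_{SU}$-free constant defined there; Corollary~\ref{cor_density} is what justifies taking this as the baseline density when a PT does not constrain other PRs ($I_{PT}=0$), and it only shifts $\sigma$ by the $\lambda_{PT}P_{PT}^{\delta}$ term without altering the functional form.

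Then I would multiply and simplify:
\begin{eqnarray*}
\beta\widehat{p}
&=& k_5 P_{SU}^{1/2}\cdot\frac{\sigma P_{SU}^{-1/2}}{\lambda_{SU}}\cdot\frac{\pi/2}{\pi/2-\tan^{-1}\!\left(\sqrt{P_{PT}/(\eta_{PR}P_{SU})}\,\rho^2\right)} \\
&=& \frac{\sigma k_5}{\lambda_{SU}}\cdot\frac{\pi/2}{\pi/2-\tan^{-1}\!\left(\sqrt{P_{PT}/(\eta_{PR}P_{SU})}\,\rho^2\right)},
\end{eqnarray*}
after which the elementary identity $\frac{\pi/2}{\pi/2-x}=\frac{\pi}{2(\pi/2-x)}$ with $x=\tan^{-1}\!\left(\sqrt{P_{PT}/(\eta_{PR}P_{SU})}\,\rho^2\right)$ yields the stated form $\beta\widehat{p}=\sigma k_5\pi\big\{2\lambda_{SU}\big[\pi/2-\tan^{-1}(\sqrt{P_{PT}/(\eta_{PR}P_{SU})}\,\rho^2)\big]\big\}^{-1}$.

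There is no genuine analytical obstacle here, since every ingredient is a previously established result; the care needed is purely bookkeeping — consistently substituting $\delta=1/2$ for $\alpha=4$ in both $\beta$ and $\widetilde{\lambda}_{SU}$, and keeping the argument of $\tan^{-1}$ exactly as normalized in Lemma~\ref{lemma_avoid}, where the $r_{PT}^2$ of the physical avoidance radius $\rho r_{PT}$ has already been folded into $\rho$. I would also confirm that combining the two factors is legitimate: $\beta$ in (\ref{eqn_neighbor}) counts successful receptions at a typical SU (hence retains $I_{PT}$), while the avoidance-region computation controls the interference power seen at a typical PR, so the two quantities concern different links and can be used together without double counting.
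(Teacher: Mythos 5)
Your proposal is correct and follows essentially the same route as the paper's (one-line) proof: combine $\beta = k_5 P_{SU}^{1/2}$ from Proposition~\ref{prop_FER} with $\widehat{p}=\widehat{\lambda}_{SU}/\lambda_{SU}$ from Lemma~\ref{lemma_avoid} and $\widetilde{\lambda}_{SU}=\sigma P_{SU}^{-1/2}$ from (\ref{eqn02}), and observe that the $P_{SU}^{1/2}$ factors cancel. You simply make explicit the bookkeeping that the paper leaves implicit.
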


\begin{proof}
    By \textbf{Proposition 1}, $\beta = k_5 {P_{SU}}^{\frac{1}{2}}$ when $\alpha=4$, and by \textbf{Lemma 2}, $\widehat{p} = \frac{\widehat{\lambda}_{SU}}{\lambda_{SU}}$. Hence we obtain the corollary.
\end{proof}

Note that $\beta$ shall be larger than some threshold $\beta_{th}$ to prevent the network from disconnecting, which mainly depends on channel fading and path loss and is numerically analyzed in \cite{Hekmat06}, e.g., $\beta_{th}=4.52$ when there is no fading effect.
Moreover, $Q(T)$ is a function of $P_{SU}$~(or $\beta$) and $T$ since it is the integral of infected subpopulation up to time $T$. In order to minimize $Q(T)$, we formulate the optimization problem as
\begin{eqnarray}
\label{eqn_buffer_opimization}
    & \textrm{Minimize } &~Q~(P_{SU},T) \nonumber\\
    & \textrm{Subject to} &~P_{SU} \geq f^{-1}(\beta_{th}),~
    P(T) \geq 1-\epsilon_T,
\end{eqnarray}
where $\beta=f(P_{SU})$ from (\ref{eqn_neighbor}). The lower bound of $P_{SU}$ is to meet the condition that $\beta \geq \beta_{th}$ to prevent the network from disconnecting, and the packet reception rate up to time $T$ is constrained with the maximum unsuccessful reception probability $\epsilon_T$. From \textbf{Corollary \ref{cor_neighbor}}, $\beta \widehat{p}$ is a monotonically decreasing function of $P_{SU}$, decreasing $P_{SU}$ induces lower delivery delay due to increment of active SUs, and henceforth lower $T$ and $Q(T)$. When $\alpha=4$, we have $\beta=k_5 {P_{SU}}^{\frac{1}{2}} \geq \beta_{th}$ and the optimal transmission power of SU becomes $P_{SU}^* = (\beta_{th}/k_5)^2$. In addition, it is trivial that the optimal global timer $T^*$ satisfies
\begin{eqnarray}
\label{eqn_optimal_T}
P(T^*) = 1-\epsilon_T.
\end{eqnarray}
We obtain the optimal global timer $T^*$ through control the maximum unsuccessful reception probability $\epsilon_T$. The buffer occupancy can be obtained by
\begin{eqnarray}
\label{eqn_buffer_occupancy}
Q(T^*)=\int^{T^*}_{0}I(t)dt.
\end{eqnarray}
The packet reception rate and buffer occupancy can be obtained when we bring optimal global timer $T^*$ into (\ref{eqn_optimal_T}) and (\ref{eqn_buffer_occupancy}) since $Q(T^*)$ is an increasing function in $T^*$.

\begin{figure}[t]
    \centering
    \includegraphics[width=3.5in]{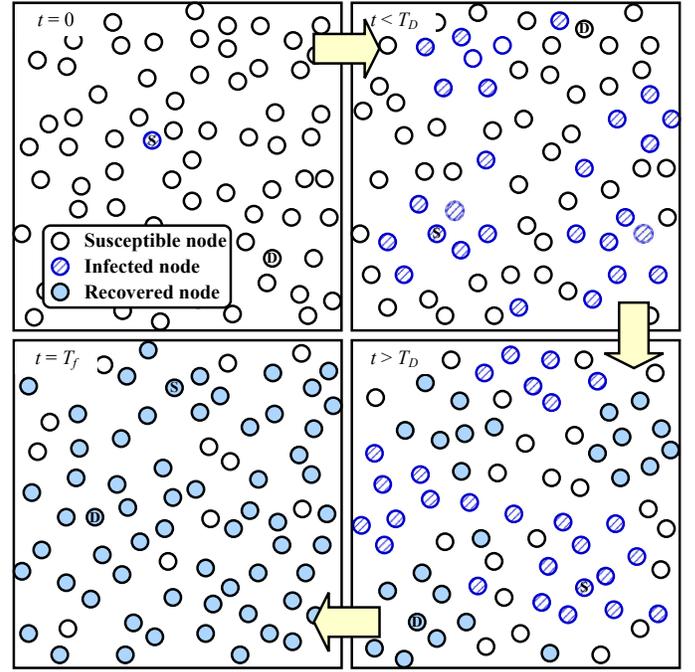}
    \caption{Illustration of the mobile flooding scheme. The source mobile node (denoted as $S$) is infected at $t=0$. Susceptible nodes who successfully receive the packet from the infected mobile nodes will transition to infected state.}
    \label{fig_mobile}
\end{figure}

\subsection{Mobile Flooding Scheme}
Figure~\ref{fig_mobile} shows the propagation behaviors of the interference-aware hybrid-recovery-assisted mobile flooding scheme. The data transportation of mobile flooding is similar to static flooding. The only difference is that nodes will move and transmit simultaneously in mobile flooding scheme.

\begin{prop}
\label{prop_MER_without_FER}
  For the mobile  flooding scheme with global timeout duration $T$ and vaccine recovery scheme deployed by avoidance region in CRAHNs, the information delivery dynamics can be characterized by
    \begin{eqnarray}
	\left\{ \begin{array}{ll}
	            \frac{dI(t)}{dt} = \widehat{p} \psi(t) \beta I(t)\frac{M-I(t)-R(t)}{M} \\ ~~~~~~~~- \widehat{p} \psi(t) \beta (R(t)+P(t)) \frac{I(t)}{M}, & \\
	            \frac{dR(t)}{dt} = \widehat{p} \mu(t) \beta (R(t)+P(t)) \frac{I(t)}{M} \\ ~~~~~~~~+ \widehat{p} \mu(t) \beta (R(t)+P(t)) \frac{M-I(t)-R(t)}{M}, & \\
	            \frac{dP(t)}{dt} = \widehat{p} \psi(t) \beta \frac{I(t)}{M}(1-P(t)), & \textnormal{~for } t \le T, \\
	            I(t)=0 \textnormal{~and } R(t)=0, & \textnormal{~for } t > T, \\
	        \end{array} \right.
	\end{eqnarray}
	  where $\psi(t)$ is the infection rate function at time $t$, $\mu(t)$ is the recovery rate function at time $t$, $\psi(t)=\mu(t)=e^{- \beta \widehat{p}(\frac{I(t)+R(t)+P(t)}{M})T_F}$, $\beta$ is the average number of neighbors per SU, and $\widehat{p}=\frac{\widehat{\lambda}_{SU}}{\lambda_{SU}}$ is the active probability of an SU.
\end{prop}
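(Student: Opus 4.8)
The plan is to derive the mobile-flooding system as the well-mixed counterpart of the static system in \textbf{Proposition~\ref{prop_FER}}, keeping every interference-related quantity ($\beta$, $\widehat{p}$, $\psi(t)$, $\mu(t)$) exactly as in that proof and changing only the contact kernel. First I would recall that under the random direction mobility model of \cite{Groenevelt05} the pairwise inter-meeting times of any two SUs are, to a good approximation, exponentially distributed with a rate independent of their instantaneous positions, so the SU population is homogeneously mixing and the classical mass-action law applies. Under that law the instantaneous rate at which susceptible relays are converted by infected nodes is proportional to the number of infected--susceptible pairs, i.e.\ to $\beta\,I(t)\,\frac{M-I(t)-R(t)}{M}$, where $\beta$ is the average number of neighbours of an SU from (\ref{eqn_neighbor}) and $M-I(t)-R(t)$ is the number of susceptible relays; this replaces the wavefront term $\frac{\beta}{2}c\sqrt{I(t)}\,\frac{M-I(t)-R(t)}{M}$ of the static scheme, which was tied to the $O(\sqrt{I})$ perimeter of a spatially contiguous infected cluster.

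Next I would import the interference layer verbatim. Exactly as in the proof of \textbf{Proposition~\ref{prop_FER}}, the aggregate packet traffic of an SU together with its active neighbours is modelled as a Poisson process of rate $g=\beta\widehat{p}$, so the probability that a scheduled (anti)packet transmission within a frame of length $T_F$ is collision-free is $\psi(t)=\mu(t)=e^{-\beta\widehat{p}\left(\frac{I(t)+R(t)+P(t)}{M}\right)T_F}$, the exponent counting the fraction of the neighbourhood that is active and already carries a (anti)packet and hence contends. The avoidance-region thinning of \textbf{Lemma~\ref{lemma_avoid}} enters through $\widehat{p}=\widehat{\lambda}_{SU}/\lambda_{SU}$, and the effective number of active neighbours $\beta\widehat{p}$ is exactly the quantity of \textbf{Corollary~\ref{cor_neighbor}}. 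Since the homogenisation of the meeting rate and the contention argument are both local, and mobility alters neither the intensity of $\Phi^{SU}$ nor that of $\Phi_{PT}$ nor the active-SU thinning, all of these quantities coincide with the static case.

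With the contact kernel and the success probabilities fixed, the three ODEs for $t\le T$ follow by bookkeeping the flows between compartments. For $P(t)$, the destination receives the packet upon a successful contact with the infected fraction $I(t)/M$ at opportunity rate $\beta\widehat{p}$ damped by $\psi(t)$, and only if it has not already received it, giving $\widehat{p}\,\psi(t)\,\beta\,\frac{I(t)}{M}(1-P(t))$. For $R(t)$, a node --- whether currently infected or still susceptible --- recovers when it meets an antipacket carrier, and the antipacket-bearing population is $R(t)+P(t)$ (the recovered nodes together with the destination once it has begun antipacket flooding), which yields the two gain terms $\widehat{p}\,\mu(t)\,\beta\,(R(t)+P(t))\frac{I(t)}{M}$ and $\widehat{p}\,\mu(t)\,\beta\,(R(t)+P(t))\frac{M-I(t)-R(t)}{M}$. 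For $I(t)$, the gain is the new-infection term and the loss is the first of these two terms, since only nodes leaving the infected compartment are vaccinated; at the global timeout the hybrid recovery forces $I(t)=0$ and $R(t)=0$ for $t>T$.

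I expect the main obstacle to be the justification of the mean-field/homogeneous-mixing step rather than the compartmental bookkeeping: one must argue that for a large SU population the stochastic flooding--vaccination process concentrates around the deterministic ODE solution, and that the random direction model genuinely produces position-independent pairwise meeting rates so that the single constant $\beta$ suffices in the mass-action term. A secondary subtlety is that $P(t)$ is a probability whereas $R(t)$ is a (scaled) count, so the combination $R(t)+P(t)$ must be read as ``recovered nodes plus the expected contribution of the destination as an antipacket source''; making this interpretation precise --- or adopting it explicitly as the modelling convention inherited from \textbf{Proposition~\ref{prop_FER}} --- is the delicate point of the derivation.
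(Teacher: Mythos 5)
Your proposal is correct and follows essentially the same route as the paper: the paper's own proof simply asserts that mobility promotes homogeneous mixing among SUs, so the static scheme's $O(\sqrt{I})$ wavefront kernel is replaced by the mass-action kernel $\beta I(t)\frac{M-I(t)-R(t)}{M}$ while $\beta$, $\widehat{p}$, $\psi(t)$, and $\mu(t)$ carry over unchanged from Proposition~\ref{prop_FER}. Your write-up is a more detailed and more careful version of that argument (the paper does not spell out the compartmental bookkeeping, the exponential inter-meeting-time justification, or the $R(t)+P(t)$ interpretation), but it introduces no new ideas beyond the paper's sketch.
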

\begin{proof}
	Since mobility is shown to promote homogeneous connections or encounters among all mobile nodes in environment \cite{Daley01,Zhang07,CPY14}, the information dissemination dynamics can be captured by the homogeneous mixing SIR model in Proposition \ref{prop_MER_without_FER},  which is distinct from the propagation model for static flooding in Proposition \ref{prop_FER}. Moreover, owing to mobility, in average an active mobile node can successfully broadcast to $\beta$ nodes, and the information dissemination dynamics are further affected by the active probability $\widehat{p}$ and collision-free probability $\psi(t)$ as discussed in Proposition \ref{prop_MER_without_FER}.
\end{proof}


\section{Performance Evaluation}
\label{sec_num}

This section investigates the performance of the proposed interference-aware recovery-assisted schemes in CRAHN by using analytical models and simulation experiments. We adopt the system parameters from~\cite{Ao10,Ao12} for simulating a CRAHN. The simulator was implemented using C++ on a single machine with 4-core CPU 3.6GHz and memory of 16GB DDR3. The details of parameter setup and corresponding notations are listed in Table~\ref{table_para}. Please note that the set of system parameters has been cross-validated in~\cite{Ao10,Ao12} for efficient and reliable operation of both primary and secondary systems.


\begin{table}[!t]
\small
\caption{Simulation parameters \label{table_para}}
\centering
  \begin{tabular}{|c|c|}
  \hline
    Parameter				& 	Value    \\
    \hline
    Density of PT $\lambda_{PT}$				& 	$10^{-5}~$PUs$/$m$^{2}$    \\
    \hline
    Density of SU $\lambda_{SU}$				&	$10^{-3}~$SUs$/$m$^{2}$      \\
    \hline
    Simulation environment                      & 	$800 \times 800$ m$^2$ \\
    \hline
    Transmission power of PT $P_{PT}$ 			&	$0.3$ mW   \\
    \hline
    Transmission power of SU $P_{SU}$			&   $0.1$ mW  \\
    \hline
    Noise power $N$									&   $10^{-9}$  \\
    \hline
    SINR threshold of PR $\eta_{PR}$ and $\eta_{SU}$                      		&   $3$  \\
    \hline
    Path-loss exponent of transmission $\alpha$          							&   $4$  \\
    \hline
    Distance between PT and PR $r_{PT}$                 					&   $15$ m \\
    \hline
    Maximum outage probability of PR $\epsilon_{PR}$          					&   $0.05$	\\
    \hline
    Maximum outage probability of SU $\epsilon_{SU}$        						&   $0.1$ \\
    \hline
    Avoidance region radius coefficient $\rho$        								&   $2$ \\
    \hline
    Time period of a frame $T_F$        								&   $1$ \\
    \hline
  \end{tabular}
\end{table}

In each round of the simulation process, we distribute PUs and SUs by using PPP with means $6.4$ and $640$, respectively. Among SUs, we randomly select one source node and one destination node for the packet transmission. All SUs successfully receiving the packet flood by source SU will participate in relaying the packet to other SUs in a flooding fashion until the packet is received by the destination SU. The successful reception of a packet between two nodes depends on rules of SINR and avoidance region. Regarding the hybrid recovery, when the destination node receives packet successfully, it starts flooding the antipacket. The nodes receiving antipacket will discard the packet and join the antipacket spreading process. Moreover, all nodes will delete packets once the global timer $T$ expires. For mobile flooding, SU randomly chooses a direction with a random speed and time duration in each movement. After reaching the new position, SU perfroms the flood operation for packets and antipackets. The simulation results are collected in each round and we totally execute 20000 rounds for a particular parameter setup. The averaged results are presented as follows.

\begin{figure*}[t]
    \centering
    \includegraphics[width=6in]{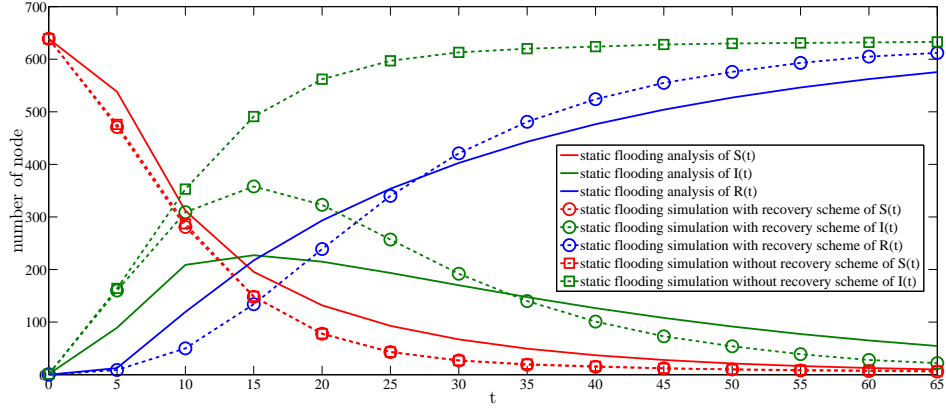}
    \caption{The number of nodes with respect to propagation time $t$ in static flooding scheme.}
    \label{fig_static_dynamics}
\end{figure*}

\begin{LaTeXdescription}
\item[Dynamics of $I(t)$, $R(t)$, and $S(t)$ in static flooding.] 
Fig. \ref{fig_static_dynamics} illustrates the number of susceptible, infected, and recovered nodes with respect to time $t$ in static flooding with the proposed hybrid recover scheme as well as the number of susceptible and infected nodes with respect to time $t$ in traditional static flooding without recovery scheme where $T=65$.  In this case, the destination node will receive the data averagely at $15$ and thus the global timer expires at the same time. We can observe that the simulation results approximately close the proposed analytical model for the hybrid recovery-assisted static flooding scheme. 

This figure also shows that the number of infected nodes increases until the destination node receives the packet. When the destination receives the packet, it performs antipacket spreading process, which enforces the nodes who receiving the antipacket transition to recovered state, thereby decreasing the number of infected nodes. Comparing with the results of traditional static flooding scheme without recovery, the number of susceptible nodes after $T_D$ decreases slightly. It is due to that nodes originally in susceptible state will transition to recovered state once they receive the antipacket. On the other hand, the number of infected nodes in recovery-assisted flooding scheme at any reference time $t>T_D$ is smaller than that in flooding without recovery scheme. It is due to two reasons. First, the nodes originally at infected state will transition to recovered state once they receive antipacket. Second, recovered nodes will not transition to infected state even they receive the packet. Consequently, the effects of antipacket spreading on the alleviation of infected nodes can be observed. 

After the global timer expires, all infected nodes transition to the recovered state, which follows (\ref{eqn_con-epi}) and assists the removing process of unnecessary packets. Obviously, the interval of global timer shall be later than the interval when the destination has received the packets. Otherwise, all nodes transition to the recovered state and the destination will receive the packet with no chance. 

\begin{figure*}[t]
    \centering
    \includegraphics[width=6in]{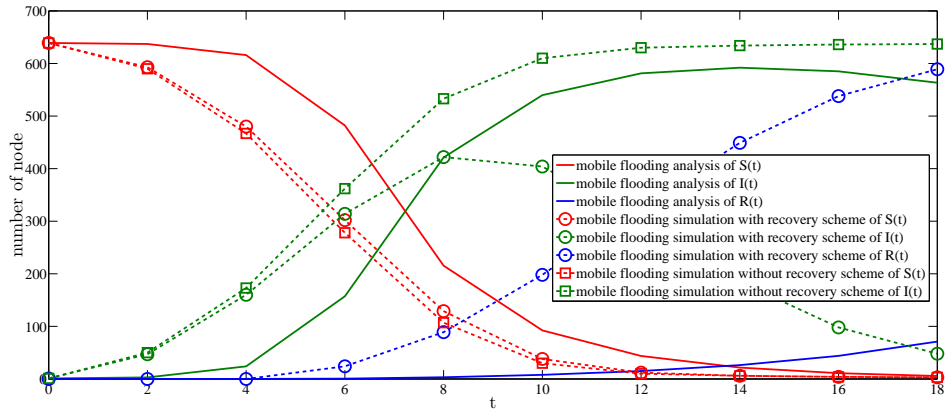}
    \caption{The number of nodes with respect to propagation time $t$ in mobile flooding scheme.}
    \label{fig_mobile_dynamics}
\end{figure*}

\item[Dynamics of $I(t)$, $R(t)$, and $S(t)$ in mobile flooding.] Fig. \ref{fig_mobile_dynamics} illustrates the number of susceptible, infected, and recovered nodes with respect to time $t$ in mobile flooding with the proposed hybrid recover scheme as well as the number of susceptible and infected nodes with respect to time $t$ in traditional mobile flooding without recovery scheme where $T=18$. In this case, the destination node will receive the data averagely at $8$ and thus the global timer expires after $T_D=8$. We can observe that the time dynamic phenomenon of susceptible, infected, and recovered nodes in the analytical model spread slowly than the simulation results and it may be the problem of parameter setup. Obviously, the spreading speed in mobile flooding is faster than that in static flooding since the advantages of mobility are fully taken.

\begin{figure}[t]
    \centering
    \includegraphics[width=3.4in]{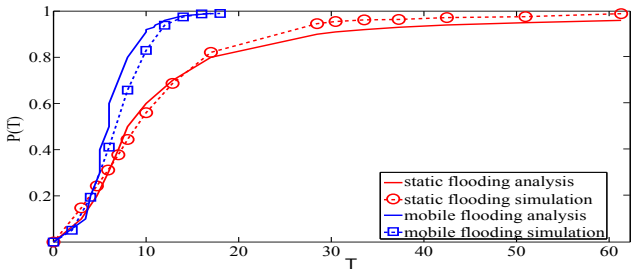}
    \caption{The packet reception probability $P(T)$ with respect to global timer $T$ in two flooding schemes.}
    \label{fig_PT}
\end{figure}


\item[Effects of $T$ on $P(T)$ in recovery-assisted static and mobile] \textbf{flooding schemes.} The global timer $T$ of two flooding schemes with respect to packet reception probability $P(T)$ is shown in Fig. \ref{fig_PT}. It is observed that $P(T)$ is a monotonic increasing function of $T$ since when $T$ is larger, the destination has higher probability to receive the packet. In addition, we find that the mobile flooding scheme has better performance than the static flooding scheme due to its mobility feature, which suggests that for a given $T$, the probability of successful packet delivery of mobile flooding scheme is higher than that of static flooding scheme. Similarly, for a fixed packet reception probability, the static flooding scheme needs to take more time than the mobile flooding scheme to achieve the same performance. Consequently, the results suggest that mobility can indeed improve data delivery.

\begin{figure}[t]
    \centering
    \includegraphics[width=3.4in]{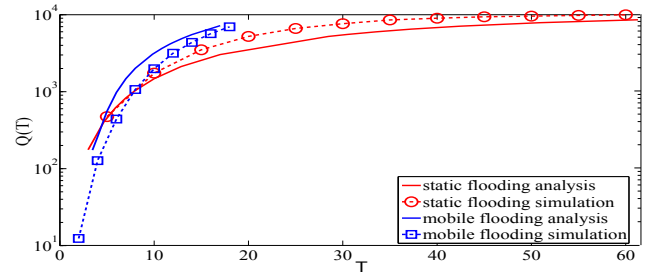}
    \caption{The Buffer occupancy $Q(T)$ with respect to global timer $T$ in two flooding schemes.}\label{fig_QT}
\end{figure}

\item[Effects of $T$ on $Q(T)$ in recovery-assisted static and mobile] \textbf{flooding schemes.} Fig. \ref{fig_QT} shows the optimal global timer $T$ of two flooding schemes with respect to buffer occupancy $Q(T)$.  We can observe that $Q(T)$ is an increasing function of $T$ since the buffer occupancy is the accumulated sum of infected nodes over time. Moreover, we also observe an intrinsic result that mobile flooding scheme spends less time to achieve same buffer occupancy compared with static flooding scheme, which can be explained by more frequent data replication. From Figs.~\ref{fig_PT} and~\ref{fig_QT}, we can observe an interesting tradeoff between reliability and buffer occupancy, that is, as $T$ increases, both $P(T)$ and $Q(T)$ increase. As a result, we can use (\ref{eqn_optimal_T}) to control the maximum packet loss rate since it well approximates the empirical packet reception probability. For instance, to achieve no more than $5 \%$ of packet loss rate, the global timer for the static and mobile flooding schemes should be $30$ and $15$, respectively. The pursuit of higher successful probability (i.e., the global timer is larger than $30$ in static flooding) will result in the price of high buffer occupancy and thus the global timer who satisfies (\ref{eqn_optimal_T}) is the optimal one.
\end{LaTeXdescription}


\section{Conclusion}
\label{sec_con}
In this paper, we study a promising architecture for cognitive sensor networks, where each sensor supporting IoT communications is equipped with CR technology for dynamic and efficient spectrum access. By casting such a cognitive sensor network as a CRAHN,
we propose a hybrid interference-aware flooding scheme for CRAHNs that utilizes global timeout and antipackets for information dissemination control. The information delivery dynamics  in CRAHNs incorporating the influences of primary receiver sensitivity, mobility of SU, and the control of recovery scheme are analyzed using a novel epidemic model. The integration of stochastic geometry and epidemic model provides efficiency and accurate analysis on reliability of end-to-end SU communications and buffer occupancy. The simulation results show that the implementation of the proposed flooding scheme indeed mitigates the buffer occupancy burden while providing statistical data delivery guarantees. Moreover, with the aid of mobility, information dissemination is shown to possess distinct characteristics that facilitates information dissemination. Consequently, this paper provides performance evaluations and modeling guidelines for efficient flooding in CRAHNs, which offers new insights on buffer occupancy and data delivery reliability analysis for cognitive IoT applications built upon CRAHNs.


\newcommand{\noop}[1]{}

\begin{IEEEbiography}[]{Pin-Yu Chen} (S'10--M'16) received the B.S. degree in electrical engineering and computer science (undergraduate honors program) from National Chiao Tung University, Taiwan, in 2009, the M.S. degree in communication engineering from National Taiwan University, Taiwan, in 2011, and the Ph.D. degree in electrical engineering and computer science and the M.A. degree in statistics from the University of Michigan Ann Arbor, MI, USA, in 2016. 
	
He is currently a research scientist in the AI Foundations Group, IBM Thomas J. Watson Research Center, Yorktown Heights, NY, USA.
His research interest includes graph and network data analytics and their applications to data mining, machine learning, artificial intelligence, and cyber security. He is a member of the \emph{Tau Beta Pi} Honor Society and the \emph{Phi Kappa Phi} Honor Society, and was the recipient of the \emph{Chia-Lun Lo Fellowship} from the University of Michigan. He received the \emph{IEEE GLOBECOM 2010 GOLD Best Paper Award} and several travel grants, including \emph{IEEE ICASSP 2014 (NSF)}, \emph{IEEE ICASSP 2015 (SPS)}, \emph{IEEE Security and Privacy Symposium 2016}, \emph{Graph Signal Processing Workshop 2016}, and \emph{ACM KDD 2016}.
\end{IEEEbiography}

\begin{IEEEbiography}[]{Shin-Ming Cheng} (S'05--M'07) received his B.S. and Ph.D. degrees in computer science and information engineering from National Taiwan University, Taipei, Taiwan, in 2000 and 2007, respectively. He was a Post-Doctoral Research Fellow at the Graduate Institute of Communication Engineering, National Taiwan University, from 2007 to 2012. Since 2012, he has been with the Department of Computer Science and Information Engineering, National Taiwan University of Science and Technology, Taipei, as an Assistant Professor. His current research interests include mobile networks, cyber security, and complex networks.

Dr. Cheng was a recipient of the \emph{IEEE PIMRC 2013 Best Paper Award} and the \emph{2014 ACM Taipei/Taiwan Chapter K. T. Li Young Researcher Award}.
\end{IEEEbiography}

\begin{IEEEbiography}[]{Hui-Yu Hsu} received her B.S. degree in computer science and information engineering from National Chi Nan University, Nantou, Taiwan, in 2013 and M.S. degree in computer science and information engineering at the National Taiwan University of Science and Technology, Taipei, Taiwan, in 2015. Her research interest includes cognitive radio networks. 
\end{IEEEbiography}

\end{document}